\algrenewcommand\algorithmicrequire{\textbf{Input:}}
\algrenewcommand\algorithmicensure{\textbf{Output:}}
\definecolor{myorange}{HTML}{E24100}
\definecolor{mygreen}{HTML}{228b22}
\def\frac#1#2{{\textstyle{#1\over#2}}}
\DeclareSymbolFont{AMSb}{U}{msb}{m}{n}
\DeclareMathSymbol{\Natural}{\mathbin}{AMSb}{"4E}
\DeclareMathSymbol{\Integer}{\mathbin}{AMSb}{"5A}
\DeclareMathSymbol{\Real}{\mathbin}{AMSb}{"52}
\DeclareMathSymbol{\Rational}{\mathbin}{AMSb}{"51}
\DeclareMathSymbol{\Imaginary}{\mathbin}{AMSb}{"49}
\DeclareMathSymbol{\Complex}{\mathbin}{AMSb}{"43} 
\DeclareMathSymbol{\Disk}{\mathbin}{AMSb}{"44} 
\def\bi{\begin{itemize}}
\def\ei{\end{itemize}}
\def\bd{\begin{description}}
\def\ed{\end{description}}
\def\ben{\begin{enumerate}}
\def\een{\end{enumerate}}
\def\calC{{\mathcal C}}
\def\calO{{\mathcal{O}}}
\def\calS{{\mathcal{S}}}
\def\hat#1{{\widehat{#1}}}
\def\pr{{\rm Pr}}
\def\Pr{\pr}
\def\E{{\rm E}}
\def\2to{{\ {\buildrel 2\over \longrightarrow}\ }}
\def\I1ton{{$I_1,\ldots,I_n$}}
\def\X1ton{{$X_1,\ldots,X_n$}}
\def\Y1ton{{$Y_1,\ldots,Y_n$}}
\def\Z1ton{{$Z_1,\ldots,Z_n$}}
\def\R1ton{{$R_1,\ldots,R_n$}}
\def\e1ton{{$e_1,\ldots,e_n$}}
\def\t1ton{{$t_1,\ldots,t_n$}}
\def\x1ton{{$x_1,\ldots,x_n$}}
\def\y1ton{{$y_1,\ldots,y_n$}}
\def\z1ton{{$z_1,\ldots,z_n$}}
\def\calS{{\mathcal{S}}}
\newtheorem{defn}{Definition}
\newtheorem{theorem}[defn]{Theorem}
\newtheorem{lemma}[defn]{Lemma}
\def\Rev#1{{#1}}
\def\Rem#1{{\iffalse #1 \fi}}
\def\Rep#1#2{{#2}}
\begin{document}
\thispagestyle{empty}
\baselineskip=28pt
\vskip 5mm
\begin{center} 
{\Large{\bf Exact Simulation of Max-Infinitely Divisible Processes}}
\end{center}

\baselineskip=12pt 
\vskip 5mm

\begin{center}
\large
Peng Zhong$^1$, Rapha\"el Huser$^1$, Thomas Opitz$^2$
\end{center}

\footnotetext[1]{
\baselineskip=10pt Computer, Electrical and Mathematical Sciences and Engineering (CEMSE) Division, King Abdullah University of Science and Technology (KAUST), Thuwal 23955-6900, Saudi Arabia. E-mails: peng.zhong@kaust.edu.sa; raphael.huser@kaust.edu.sa}
\footnotetext[2]{
\baselineskip=10pt BioSP, INRAE, Avignon, 84914, France, E-mail: thomas.opitz@inrae.fr}

\baselineskip=17pt
\vskip 4mm
\centerline{\today}
\vskip 6mm

\begin{center}
{\large{\bf Abstract}}
\end{center}
Max-infinitely divisible (max-id) processes play a central role in extreme-value theory and include the subclass of all max-stable processes. They allow for a constructive representation based on the pointwise maximum of random functions drawn from a Poisson point process defined on a suitable {function} space. Simulating from a max-id process is often difficult due to its complex stochastic structure, while calculating its joint density in high dimensions is often numerically infeasible. Therefore, exact and efficient simulation techniques for max-id processes are useful tools for studying the characteristics of the process and for drawing statistical inferences. Inspired by the simulation algorithms for max-stable processes, \Rep{we here develop theory and algorithms to generalize simulation approaches tailored for certain flexible (existing or new) classes of max-id processes.}{theory and algorithms to generalize simulation approaches tailored for certain flexible (existing or new) classes of max-id processes are presented.} Efficient simulation for a large class of models can be achieved by implementing an adaptive rejection sampling scheme to sidestep a numerical integration step in the algorithm. \Rep{We present the}{The} results of a simulation study \Rep{highlighting}{highlight} that our simulation algorithm works as expected and is highly accurate and efficient, such that it clearly outperforms customary approximate sampling schemes. \Rep{As a byproduct we also develop here new max-id models, which can be represented as pointwise maxima of general location-scale mixtures, and which possess flexible tail dependence structures capturing a wide range of asymptotic dependence scenarios.}{As a by-product, new max-id models, which can be represented as pointwise maxima of general location-scale mixtures and possess flexible tail dependence structures capturing a wide range of asymptotic dependence scenarios, are also developed.}
\baselineskip=16pt

\par\vfill\noindent
{{\bf Keywords}: Adaptive rejection sampling; Exact simulation; Extremal function; Max-infinitely divisible process; Max-stable process}.\\

\pagenumbering{arabic}
\baselineskip=24pt

\newpage


\allowdisplaybreaks

\section{Introduction}
\label{sec:introduction}
Max-infinitely divisible processes (max-id) have gained much popularity in recent years for the flexible modeling of \Rev{spatiotemporal} extremes in phenomena such as precipitation, wind speeds and temperatures \citep{Padoan2013,Huser.etal:2020,Bopp.etal:2021, Huser.Wadsworth:2020, Zhong.Huser.Optiz:2022}. Such models encompass max-stable processes \citep{Haan1984} as a subclass, which are currently the mainstream for modeling spatial \Rev{and temporal} extremes, as demonstrated by a large body of theoretical and applied literature \Rev{\citep[see, e.g.,][]{Davison.etal:2012,Opitz:2013,Huser.Davison:2014,Dey2016,Gissibl.et.al:2018,Davison.etal:2019}}. Max-stability is a property that arises in asymptotic models for multivariate block maxima, but that is however often violated in finite samples \citep{Huser.Wadsworth:2020}. Precisely, the dependence strength in max-stable processes does not depend on event magnitude, while most environmental and climatic processes suggest that it should become weaker at higher thresholds, with the most extreme events being typically more localized. To circumvent this limitation, more flexible max-id processes, which drop the restrictive max-stability assumption but retain natural properties of multivariate block maxima, have been considered. In particular, the max-id models proposed by \citet{Huser.etal:2020}, \citet{Bopp.etal:2021} and \citet{Zhong.Huser.Optiz:2022}, extend certain popular classes of max-stable models (obtained as limiting cases on the boundary of the parameter space), in order to achieve a good compromise between the pragmatism of flexible max-id models and the strong theoretical foundations of the max-stable sub-class. 

The theory behind max-id processes has been studied in depth \citep{Brown.Resnick:1977,Resnick1987,Gine1990}. In particular, \cite{Gine1990} showed that max-id processes possess a functional Poisson point process (PPP) representation. More precisely, each max-id process $Z = \{Z(\bm s)\}_{\bm s \in \calS}$ defined over the region $\calS\subset\mathbb{R}^d$, can be defined as the pointwise maximum of a potentially infinite number of random functions $\{\eta_i; \ i=1,2,\ldots\}$ defined over $\calS$, which are sampled according to a PPP with mean measure {$\Lambda$}, i.e.,
\begin{equation}
\label{eq:max-id}
Z(\bm s)=\max_{i=1,2,\ldots}\eta_i(\bm s),\quad \bm s\in\mathcal{S},    
\end{equation}
{where, by convention, $Z(\bm s)$ takes value at the lower boundary of the support of $\eta_i(\bm s)$ when the PPP $\{\eta_i\}$ contains no point (which can happen when $\Lambda$ is finite). To avoid intricate theoretical issues dealing with the lower boundary, we here assume that $\Lambda$ is an infinite measure, such that the number of Poisson points $\eta_i$ in \eqref{eq:max-id} is infinite.} 
However, analytical forms of the multivariate {density} functions of such processes exist only in special cases, and by analogy with max-stable processes their numerical evaluation in high dimensions becomes impossible due to the combinatorial explosion of the number of terms to be calculated \citep{Padoan.etal:2010,Castruccio.etal:2016,Huser.etal:2020}. Consequently, simulation of max-stable processes is often crucial to study their dependence characteristics, or for simulation-based inference \citep{Erhardt2012,Hainy.etal:2016,Lee.etal:2018}, and the same is true for general classes of max-id processes. However, exact simulation of $Z(\bm s)$ in \eqref{eq:max-id} is not trivial, especially when the number of functions $\eta_i$ is infinite and when these random functions are tricky to sample.

For the subclass of max-stable models, there are essentially two types of generic simulation algorithm that have been proposed based on variants of model representations stemming from \eqref{eq:max-id} \citep{Oesting.Strokorb:2021}.  These algorithms exploit the specific structure of max-stable processes---the so-called \emph{spectral representation} \citep{Haan1984}, whereby on the unit Fr\'echet scale (i.e., {$\pr\{Z(\bm s)\leq z\}=\exp(-1/z)$, $z>0$}) the points of the Poisson process $\{\eta_i;i=1,2,\dots\}$ in \eqref{eq:max-id} can be decomposed into two multiplicative and stochastically independent components as $\eta_i(\bm s)=R_iW_i(\bm s)$. Precisely, $\{R_i;\ i=1,2,\dots\}$ here denotes a Poisson point process on $[0,\infty)$ with {mean measure} $r^{-2}{\rm d}r$, independent of the random functions $\{W_i;\ i=1,2,\dots\}$, which are independent copies of a random process $W(\bm s)$ defined on $\calS$ such that $\E[\max\{0,W(\bm s)\}]=1$. This decomposition always exists for max-stable processes with unit Fr\'echet margins but is not unique. When the points $\{R_i;\ i=1,2,\dots\}$ have heavy-tailed intensity ${\alpha} r^{-(\alpha+1)}{\rm d}r$ for some $\alpha>0$, the resulting process $Z$ remains max-stable, but with $\alpha$-Fr\'echet margins. Moreover, the points $\{R_i^{-1};\ i=1,2,\dots\}$ can be defined as the arrival times of a renewal process such that we can simulate them in decreasing order, i.e., $R_1>R_2>\ldots$. When this is the case, the componentwise maximum of $\{R_jW_j; \ j=1,\ldots,i-1\}$ is less and less impacted by the $i$-th process $\eta_i=R_iW_i$ as $i>1$ is iteratively incremented. If $W(\bm s)$ is an almost surely bounded random process on $\calS$, the contribution of $\eta_i=R_iW_i$ becomes even ultimately completely irrelevant{, as $i$ grows}. Based on this idea, \cite{Schlather:2002} introduced the first exact simulation algorithm for max-stable processes by setting an appropriate stopping rule for the simulation of the Poisson points $R_iW_i$. 
If $W(\bm s)$ is unbounded, we can use certain alternative representations  to make it a bounded random process under a suitable change of measure \citep{Dieker.Mikosch:2015,Dombry.Engelke.Oesting:2016,Oesting.Schlather.Zhou:2018}, but simulation from the bounded process is intricate and numerically challenging  even in moderate dimensions for many of the customary models \citep{Kabluchko.etal:2009,Opitz:2013}. The second type of exact simulation algorithm for max-stable processes, introduced by \cite{Dombry.Engelke.Oesting:2016}, relies on the concept of extremal functions \citep{Dombry2013}. With this method, it is possible to directly simulate the random functions $\{\eta_i\}$ on $\calS$ that contribute to the maximum at a finite number of locations. In practice, efficient conditional simulation of the random functions $\eta_i(\bm s)$ given its value at a single location $\bm s_0$ is a prerequisite for efficient simulation with this approach.

However, none of the approaches described above tackled exact simulation for general max-id processes, and we intend to fill this gap. In this work, we demonstrate that the idea of the exact simulation algorithm based on extremal functions, developed in \cite{Dombry.Engelke.Oesting:2016} for max-stable processes carries over to the general max-id case after suitable adjustments, and we illustrate it with max-id models that have found interest in practical applications. The paper is organized as follows. In \S\ref{sec:Max-id}, we recall some theoretical background and formally define the notions of max-id processes and extremal functions. Then, in \S\ref{sec:Exact-Simu}, we derive the specificities of the theory of max-id processes required for exact simulation based on extremal functions, and we present simulation algorithms focusing on two major classes of max-id processes. In particular, we start with some known max-id models of the form \eqref{eq:max-id} where the random functions $\eta_i(\bm s)$ can be represented as Gaussian scale mixtures, and we then also explore new max-id models based on Gaussian location mixtures that possess appealing tail dependence characteristics. In \S\ref{sec:Simu-Study}, we conduct a simulation study to demonstrate the performance of our proposed simulation algorithm. We conclude with a discussion in \S\ref{sec:Discussion}.

\section{Max-id Processes and Extremal Functions}
\label{sec:Max-id}
A process $\{Z(\bm s)\}_{\bm s \in \calS}$ is called max-infinitely divisible (max-id) if the joint cumulative distribution function (cdf) $G$ of $\bm Z(K_N)=\{Z(\bm s_1),\ldots, Z(\bm s_N)\}^\top$ at any finite collection of sites $K_N=\{\bm s_1,\bm s_2,\dots,\bm s_N\}\subset \calS$ defines a valid cdf $G^t$ (with the notation $G^t(\bm z)=\{G(\bm z)\}^t$, $\bm z=(z_1,\ldots,z_N)^\top\in \mathbb R^N$) for any  $t>0$. The distribution $G^t$ does not necessarily stay within the same location-scale family;\ this property is only satisfied for max-stable distributions $\tilde G$, for which $\tilde G^m(\bm a_m\bm z+\bm b_m)=\tilde G(\bm z)$, for any integer $m=1,2,\dots$, and appropriately defined normalizing vectors $\bm a_m\in (0,\infty)^{N}$ and $\bm b_m\in\mathbb R^N$. As shown in Equation~\eqref{eq:max-id}, any max-id process can be constructed by taking pointwise maxima over a Poisson point process defined on a suitable function space. We assume that $\{Z(\bm s)\}_{\bm s \in \calS}$ is a sample-continuous max-id process in a function space $\calC$, where $\calC$ denotes the space of continuous functions with compact support $\calS \subset \mathbb{R}^d$ endowed with the uniform norm, i.e., \Rev{$\|f\|=\sup_{\bm s\in \calS} |f(\bm s)|, f\in \calC$}, and $\{\eta_i(\bm s);\  i=1,2,\dots\}_{\bm s\in\calS}$ in \eqref{eq:max-id} is a Poisson point process with mean measure $\Lambda$ on $\calC$. Then, the extremal functions are defined as follows. 
\begin{defn}
	 Let $K\subset \calS$ be a nonempty subset of the domain $\calS$, and $Z$ a max-id process. A function $\phi\in\calC$ is called $K$-extremal if there exists $\bm s\in K$ such that $Z(\bm s)=\phi(\bm s)$, otherwise $\phi$ is called $K$-subextremal. We denote $\calC_K^+(Z)$ the set of K-extremal functions and $\calC_K^-(Z)$ the set of $K$-subextremal functions.
\end{defn}
In other words, an extremal function is a Poisson point $\eta_i$ in $\calC$ that contributes to the maximum process $Z$ at one or more locations. Notice that both $\calC_K^+(Z)$ and $\calC_K^-(Z)$ depend on the max-id process $Z$, even though the realization of the max-id process $Z$ on $K$ is fully determined by $\calC_K^+(Z)$ alone. Let $\ell$ be the vertex function (i.e., the lower boundary) of the max-id process $Z$, defined by 
\begin{equation}\label{eq:vertex}
\ell(\bm s) = \sup\{z\in \mathbb R: \pr(Z(\bm s)\geq z) = 1\} \in [-\infty,\infty), \quad\bm s \in \calS.
\end{equation} 
\cite{Gine1990} showed that for sample continuous max-id processes in $\calC$, if the vertex function $\ell$ is continuous, then it can be subtracted from $Z$. Therefore, we here assume $\ell(\bm s)=0$ without loss of generality, and denote $\calC_0=\{f\in\calC; f\neq 0, f\geq 0\}$. We further assume that for any fixed $\bm s_0\in \calS$, 
\begin{gather}\label{eq:assumption}
	\Lambda_{\bm s_0}(z)=\Lambda(\{\eta\in\calC_0; \eta(\bm s_0)\in [z,\infty) \})\ \text{is continuous on}\ (0,\infty),\ \lim_{z\downarrow 0}\Lambda_{\bm s_0}(z)=\infty, \\
\Lambda_{\bm s_0}(\{\eta\in\calC_0; \eta(\bm s_0) > \varepsilon \}) < \infty, \forall\ \varepsilon > 0.
\end{gather}
Under these assumptions, \cite{Dombry2013} showed that for singletons $K=\{\bm s_0\}$ the set of extremal functions $\calC_{\{\bm s_0\}}^+$ contains almost surely a single point {(a function)}, denoted by $\phi_{\bm s_0}^+$, i.e., the resulting max-id ``process" $Z(\bm s_0)$ is realized by one and only one Poisson point $\eta_i$ at a given location $\bm s_0$. {This assumption implies that $Z$ has continuous margins and the marginal distribution has no mass at the lower boundary $\ell(\bm s) =0$.}

Under the standing assumptions, \citet[][Lemma A.2]{Dombry2013} show that  the mean measure $\Lambda$ possesses a \emph{regular conditional probability measure} $P_{\bm s_0}(z,\cdot)$ with respect to the location $\bm s_0$; i.e., given that $\eta_i(\bm s_0)=z>0$ for a point $\eta_i$ of the Poisson point process, the random function $\eta_i(\bm s)$ for $\bm s\not=\bm s_0$ has (conditional) probability distribution given by $P_{\bm s_0}(z,\cdot)$. \citet{Dombry2013} further show in their Theorem~3.2 that $P_{\bm s_0}(z,\cdot)$ is also the conditional distribution for the extremal function $\phi_{\bm s_0}^+$ at $\bm s_0$ given $Z(\bm s_0)=z > 0$; moreover, conditionally on the values $Z(\bm s_1),\ldots,Z(\bm s_n)$ of the max-id process at $K=\{\bm s_1,\ldots,\bm s_n\}$, the subextremal functions in $\calC_K^-(Z)$ define a Poisson point process that is independent of the extremal functions in $\calC_K^+(Z)$. This restricted Poisson point process has again mean measure $\Lambda$ but now with all the mass ``removed'' when one of the conditioning values $Z(\bm s_1),\ldots,Z(\bm s_n)$ is exceeded, i.e., it has no mass on $\{\eta \in\calC_0: \max_{i=1}^n \eta(\bm s_i)/Z(\bm s_i)>1\}$. This remarkable result is the key for the proposed simulation algorithm that consists in iteratively simulating extremal functions at a set of locations (see \S\ref{sec:generalalgo}), and we summarize it in the following lemma.
  \begin{lemma}\label{lemma}\citep[][Theorem 3.2]{Dombry2013}
  	Given the extremal functions $\calC_K^+(Z)$ on $K$, the subextremal functions $\calC_K^-(Z)$ form a Poisson point process on $\calC_0$ with intensity $\mathbb{I}(f(K)< Z(K))\Lambda(\mathrm{d}f)$, where $\mathbb{I}(\cdot)$ is the indicator function.
  \end{lemma}
  In particular, when the process $\{Z(\bm s)\}_{\bm s\in\calS}$ is max-stable with unit Fr\'echet margins, the Poisson point process $\{\eta_i;\ i=1,2,\dots\}$ can be decomposed as $\eta_i(\bm s) = R_iW_i(\bm s)$, $i=1,2,\dots$, where $\{R_i;\ i=1,2,\dots\}$ is a Poisson point process on $(0,\infty)$ with mean measure $\kappa([r,\infty])=r^{-1}, r>0$, and $\{W_i(\bm s)\}_{\bm s\in \calS},\  i=1,2,\dots,$ are independent copies of a random process $\{W(\bm s)\}_{\bm s\in \calS}$ with $\E[\max\{W(\bm s),0\}]=1$, which are also independent of the points $\{R_i;i=1,2,\dots\}$ \citep{Haan1984,Schlather:2002}. A number of parametric max-stable models have been proposed for statistical applications, such as the Brown--Resnick model \citep{Brown.Resnick:1977, Kabluchko.etal:2009}, and the extremal-$t$ model \citep{Opitz:2013}, where the intensity function associated to the mean measure $\Lambda$ has a closed form expression. In these special cases, the multivariate conditional distributions required for exact simulation are relatively easy to simulate from, 
and \cite{Dombry.Engelke.Oesting:2016} provide the exact forms of the distributions $P_{\bm s}(z,\cdot)$ required for simulating these max-stable models. 

However, such simplifications are no longer available for general max-stable processes and for the classes of max-id (but not max-stable) process models proposed in the recent literature. Very often, the mean measure $\Lambda(\cdot)$ does not have a closed-form intensity function, and the corresponding conditional distributions are complicated; rather, calculations of relevant measures of the form $\Lambda(B)$ and intensities often involve computing a one-dimensional integral \citep{Huser.etal:2020, Zhong.Huser.Optiz:2022}. Therefore, simulating from $P_{\bm s}(z,\cdot)$ is more involved than in the max-stable case and requires further investigation into its form to propose efficient simulation techniques.

\section{Exact Simulation}
\label{sec:Exact-Simu}

\subsection{General simulation algorithm}
\label{sec:generalalgo}

We now extend the generic max-stable simulation algorithm based on extremal functions, as proposed  by \citep{Dombry.Engelke.Oesting:2016}, to the general max-id case. {While our proposed algorithm in the max-id case is almost identical in its overall idea and general structure to \citeauthor{Dombry.Engelke.Oesting:2016}'s algorithm in the max-stable case (modulo certain technicalities and specificities in how the different steps are executed), it is still important to note that generalizing a specific simulation algorithm that works for max-stable processes to the much wider class of max-id models, while remaining an exact and feasible simulation procedure, is not necessarily trivial and requires careful investigation. This has never been explored before.} 

{Our aim is, thus,} to simulate the max-id process $\{Z(\bm s)\}_{s\in\calS}$ at a the finite number $N$ of sites given as $K_N=\{\bm s_1,\bm s_2,\dots,\bm s_N\}\subset \calS$. We define the subsets $K_n=\{\bm s_1,\dots,\bm s_n\}$ of the first $n$ sites for  $n=1,\dots,N$, and we denote by $Z_n = \max_{\eta_i\in \calC^+_{K_n}}\eta_i$ the process constructed from the extremal functions at the sites in $K_n$, such that  {$\bm Z(K_N)=\bm Z_N(K_N):=\{Z_N(\bm s_1),\ldots,Z_N(\bm s_N)\}^\top$}. 
To simulate from the joint distribution of the process $\{Z(\bm s)\}$ at sites in $K_N$, we can iterate through the set of locations from $\bm s_1$ to $\bm s_N$ in order to simulate the extremal functions at each location.  
{More precisely, suppose we have already simulated the extremal functions on $K_n$. Recall that we denote by $\{\phi_{\bm s_n}^+\}_{1\leq n\leq N}$	the sequence of extremal functions associated to the singletons $\{\bm s_n\}$, $n=1,\ldots,N$, where some of these functions may be identical to each other. The extremal function $\phi_{\bm s_{n+1}}^+$ at a new location $\bm s_{n+1}\notin K_n$ either coincides with one of the extremal functions on $K_n$ already simulated, i.e., $\phi_{\bm s_{n+1}}^+\in\calC_{K_n}^+(Z)$, or it corresponds to a subextremal function on $K_n$, i.e., $\phi_{\bm s_{n+1}}^+\in\calC_{K_n}^-(Z)$, in which case it is a point of the restricted Poisson point process defined in Lemma~\ref{lemma}. Moreover, at $\bm s_{n+1}$ the maximum of the values of the extremal functions on $K_n$, i.e., $Z_n(\bm s_{n+1})$, already provides a lower bound for the values of the new extremal function that could arise at $\bm s_{n+1}$. Therefore, we proceed as follows to find the extremal function $\phi_{\bm s_{n+1}}^+$, which directly provides $Z(\bm s_{n+1})$. We simulate from the unrestricted point process by assuming that its points are  ordered in descending order of the values at $\bm s_{n+1}$, i.e., $\eta_1(\bm s_{n+1})>\eta_2(\bm s_{n+1})>\ldots$. Simulation starts from the point $\eta_1$ with the largest value at $\bm s_{n+1}$ and then continues in descending order. To keep only the points from the restricted point process of the subextremal functions of $K_n$, we reject the points violating the upper bound conditions at $\bm s_{1},\ldots,\bm s_n$ set by the restrictions given in Lemma~\ref{lemma}.  We simulate  a point $\eta_i$ by first simulating its value $\eta_i(\bm s_{n+1})$ at $\bm s_{n+1}$ according to  the marginal (unrestricted) Poisson point process with mean measure $\Lambda_{\bm s_{n+1}}$, 
	and we then simulate the values at all the 
	other sites in $K_N$ using the regular conditional probability distribution  $P_{\bm s_{n+1}}(\eta_i(\bm s_{n+1}),\cdot)$. Moreover, we can stop simulating as soon as a value $\eta_i(\bm s_{n+1})$ falls below the lower bound $Z_n(\bm s_{n+1})$  and we reject it, since this value (as well as all the following smaller ones) cannot correspond to the extremal function at $\bm s_{n+1}$. In any case, we stop as soon as we obtain a point $\eta_i$ that is not rejected, and this point then corresponds to $\phi_{\bm s_{n+1}}^+$. If we have rejected all points until we stop,  there is no new extremal function at $\bm s_{n+1}$,  such that $\phi_{\bm s_{n+1}}^+\in\calC_{K_n}^+(Z)$.}
 
{
 The benefit of this approach is that it requires to simulate only a finite number of Poisson points to either obtain a new extremal function at $\bm s_{n+1}$, or to confirm that the extremal function at $\bm s_{n+1}$ coincides with one of the already simulated extremal functions on $K_n$. Therefore, the approach exploits the fact that the already simulated extremal functions provide upper bounds at $\bm s_1,\ldots,\bm s_n$ and a lower bound at $\bm s_{n+1}$.  This idea then allows us to iteratively simulate all of the extremal functions at a finite set of locations $\bm s_1,\bm s_2,\ldots,\bm s_N$, where we condition on the extremal functions at $\bm s_1,\ldots,\bm s_{n}$ when identifying the extremal function at $\bm s_{n+1}$, $1\leq n<N$.  Since the extremal functions define the values of the max-id process at the locations $\bm s_1,\bm s_2,\ldots,\bm s_N$, this is all we need. \citet{Dombry.Engelke.Oesting:2016} have proposed this approach for  max-stable processes, but their algorithm generalizes straightforwardly to max-id processes thanks to the general results of \citet{Dombry2013}.} 
The following theorem {formalizes the simulation procedure detailed above by characterizing} the conditional distributions of $\phi_{\bm s_{n}}$ given $\{\phi_{\bm s_k}^+\}_{1\leq k\leq n-1}$  for extremal functions at position $n=2,\ldots,N-1$, which are required for iterative simulation. 

\begin{theorem}[Conditional distributions of extremal functions]\label{thm:2}
Consider a max-id process  $\{Z(\bm s)\}_{s\in\calS}$  as defined in \S\ref{sec:Max-id}. Given sites $K_N=\{\bm s_1,\bm s_2,\dots,\bm s_N\}$,  the distribution of its extremal function $\phi_{\bm s_{1}}^+$ and of its conditional extremal functions {$\phi_{\bm s_{n+1}}^+\mid\{\phi_{\bm s_k}^+\}_{1\leq k\leq n}$ for $n\geq 1$} can be characterized as follows:
\begin{enumerate}[(1)]
    \item Initial extremal function: given a realization $z$ of $Z(\bm s_1)$, the extremal function $\phi_{\bm s_1}^+$ is distributed according to $P_{\bm s_1}(z,\cdot)$.
    \item Conditional extremal functions for $n\geq 1$: given the extremal functions $\{\phi_{\bm s_i}^+\}_{1\leq i \leq n}$ for sites in $K_{n}$, consider the Poisson measure  of functions  that  do not exceed the already simulated max-id values of sites in $K_{n}$ but that exceed the maximum of already simulated values at $\bm s_{n+1}$:
     \begin{equation}\label{eq:simu}
      \tilde\Lambda(\mathrm{d}f)=\mathbb{I}\left(f(\bm s_i)<Z_{n}(\bm s_i),1\leq i\leq n\right)\mathbb{I}\left(f(\bm s_{n+1})>Z_n(\bm s_{n+1})\right)\,\Lambda(\mathrm{d}f).
      \end{equation}
      Then, {we denote the Poisson point process with mean measure $\tilde{\Lambda}$ by ${\rm PPP}(\tilde\Lambda)$,} and the extremal function $\phi_{\bm s_{n+1}}^+$ conditional on $\{\phi_{\bm s_i}^+\}_{1\leq i \leq n}$ is given as follows: 
    \[\tilde \phi_{\bm s_{n+1}}^+ = \left\{\begin{array}{ll}
    \arg\max_{\phi\in {\rm PPP}(\tilde\Lambda)}\phi(\bm s_{n+1}), & {{\rm PPP}(\tilde\Lambda)\not=\emptyset,}\\
     \arg\max_{\phi\in\{\phi_{\bm s_1}^+,\dots,\phi_{\bm s_{n}}^+\}}\phi(\bm s_{n+1}),    & 
     {{\rm PPP}(\tilde\Lambda)=\emptyset\Rep{,}{.}}
    \end{array}\right.\]
\end{enumerate}
\end{theorem}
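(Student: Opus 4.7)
The plan is to obtain Part (1) as an immediate corollary of Proposition 4.1 of Dombry and Eyi-Minko (2013), already cited in the excerpt: under assumption~\eqref{eq:assumption}, $\calC_{\{\bm s_1\}}^+$ is almost surely a singleton and, given $Z(\bm s_1)=z$, its unique element is distributed according to $P_{\bm s_1}(z,\cdot)$. The real work lies in Part (2), which I would attack by an inductive application of the restriction/independence property of Poisson point processes combined with the singleton result from Part (1).

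The key structural step is to partition $\calC = \calC_{K_n}^+\sqcup\calC_{K_n}^-$, where $\calC_{K_n}^+$ is the (a.s.\ finite) collection of functions realising the maximum at some site in $K_n$, and $\calC_{K_n}^- = \{f\in\calC:\ f(\bm s_i)<Z_n(\bm s_i),\ i=1,\ldots,n\}$. By the disjoint restriction theorem, the two pieces $\Pi\cap\calC_{K_n}^+$ and $\Pi\cap\calC_{K_n}^-$ of $\Pi=\{\eta_i\}\sim\mathrm{PPP}(\Lambda)$ are independent, so conditioning on the already-simulated extremal functions $\{\phi_{\bm s_i}^+\}_{1\leq i\leq n}$ (equivalently, conditioning on the configuration of $\Pi\cap\calC_{K_n}^+$) leaves $\Pi\cap\calC_{K_n}^-$ as a PPP with intensity $\mathbb{I}(f\in\calC_{K_n}^-)\,\Lambda(\mathrm{d}f)$. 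Decomposing
\[
Z(\bm s_{n+1})=\max\!\left\{Z_n(\bm s_{n+1}),\ \max_{\eta\in\Pi\cap\calC_{K_n}^-}\eta(\bm s_{n+1})\right\},
\]
I would then argue that $\phi_{\bm s_{n+1}}^+$ lies in $\{\phi_{\bm s_1}^+,\ldots,\phi_{\bm s_n}^+\}$ precisely when no point of the residual PPP exceeds $Z_n(\bm s_{n+1})$ at $\bm s_{n+1}$. Otherwise, a further restriction of the residual PPP to $\{f(\bm s_{n+1})>Z_n(\bm s_{n+1})\}$ yields exactly $\mathrm{PPP}(\tilde\Lambda)$ with $\tilde\Lambda$ as in~\eqref{eq:simu}, and applying the singleton result of Part (1) to this restricted PPP identifies its $\arg\max$ at $\bm s_{n+1}$ as the desired conditional $\phi_{\bm s_{n+1}}^+$.

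I expect the main obstacle to be the careful handling of the random partition: the sets $\calC_{K_n}^\pm$ depend on $\Pi$ itself, so the bare restriction theorem does not apply directly, and one must proceed by induction on $n$, disintegrating the joint law of $(\phi_{\bm s_1}^+,\ldots,\phi_{\bm s_n}^+,\Pi\cap\calC_{K_n}^-)$ into a conditional distribution of extremal functions paired with an independent residual PPP on a conditionally deterministic set. A secondary technicality is the a.s.\ uniqueness of the $\arg\max$ at $\bm s_{n+1}$ for $\mathrm{PPP}(\tilde\Lambda)$: this follows from assumption~\eqref{eq:assumption} applied at $\bm s_{n+1}$, which guarantees that the coordinate $f\mapsto f(\bm s_{n+1})$ has no atoms under $\tilde\Lambda$, so that Poisson points take distinct values at $\bm s_{n+1}$ almost surely and the extremal function is well defined.
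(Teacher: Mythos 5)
Your proposal follows essentially the same route as the paper's proof: Part (1) is read off from Proposition 4.1 of Dombry and Eyi-Minko (2013), and Part (2) rests on the Poisson independence/restriction property to show that, conditionally on the extremal functions over $K_n$, the candidate new extremal function at $\bm s_{n+1}$ comes from the restricted point process with mean measure $\tilde\Lambda$ (or from the previous extremal functions when that process is empty). In fact you are more explicit than the paper about the two technical points it glosses over — the dependence of the partition $\calC_{K_n}^{\pm}$ on the process itself, requiring a disintegration argument, and the a.s.\ uniqueness of the $\arg\max$ via assumption~\eqref{eq:assumption} — so the plan is sound and consistent with the published argument.
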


\begin{proof}
{The proof essentially follows the intuitive description of the algorithm above. It is almost identical (modulo some technicalities) to the proof of Theorem 2 in \cite{Dombry.Engelke.Oesting:2016}, which concerns max-stable processes, while we here treat max-id processes.} 

As mentioned, we proceed sequentially to simulate extremal functions. The distribution of the first extremal function $\phi_{\bm s_1}^+$ given $\phi_{\bm s_1}^+(\bm s_1) = z$ is exactly $P_{\bm s_1}(z,\cdot)$, as explained in \S\ref{sec:Max-id} above, and $\phi_{\bm s_1}^+(\bm s_1)$ thus follows the marginal distribution of $Z(\bm s_1)$, which has now a more general form than for max-stable processes. Notice also that given the continuity assumption in \eqref{eq:assumption}, there is almost surely exactly one extremal function at each location.

{Then, to simulate the extremal function $\phi_{\bm s_{n+1}}^+$ at site $\bm s_{n+1}$, for $n\geq 1$, we recall that $\phi_{\bm s_{n+1}}^+$ is either a subextremal function in $\calC_{K_n}^-$ or an extremal function in $\calC_{K_n}^+$. According to Lemma~\ref{lemma}, given $\calC_{K_n}^+$, the conditional distribution of $\calC_{K_n}^-$ follows the distribution of a Poisson point process with intensity  
 \begin{equation}\label{eq:restricted-ppp}
\mathbb{I}\left(f(\bm s_i)<Z_{n}(\bm s_i),1\leq i\leq n\right)\,\Lambda(\mathrm{d}f).
\end{equation}
{If $\phi_{\bm s_{n+1}}^+\in\calC_{K_n}^-$, we need $\phi_{\bm s_{n+1}}^+(\bm s_{n+1})>Z_n(\bm s_{n+1})$, such that $\phi_{\bm s_{n+1}}^+$ is the extremal function at $\bm s_{n+1}$.  Therefore, we can further restrict the Poisson process with intensity in \eqref{eq:restricted-ppp} and  consider the Poisson process ${\rm PPP}(\tilde\Lambda)=\calC_{K_n}^-\cap \{f\in\calC_0: f(\bm s_{n+1}) > Z_n(\bm s_{n+1})\}$, which has the mean measure $\tilde{\Lambda}$ given in Equation~\eqref{eq:simu}. If there is no point in ${\rm PPP}(\tilde\Lambda)$, i.e. $\phi_{\bm s_{n+1}}^+\notin\calC_{K_n}^-\cap \{f\in\calC_0: f(\bm s_{n+1}) > Z_n(\bm s_{n+1})\}$, it must be true that $\phi_{\bm s_{n+1}}^+\in\calC_{K_n}^+$ and  $\phi_{\bm s_{n+1}}^+=\arg\max_{\phi\in\{\phi_{\bm s_1}^+,\dots,\phi_{\bm s_{n}}^+\}}\phi(\bm s_{n+1})$.  Otherwise, $\phi_{\bm s_{n+1}}^+(\bm s_{n+1})$ is a point of ${\rm PPP}(\tilde\Lambda)\not=\emptyset$, and it can be identified according to the procedure outlined above: we simulate from the unrestricted Poisson process and reject the points violating the restriction  in Equation~\eqref{eq:simu} until either the restriction is satisfied or the simulated value falls below the lower bound $Z_n(\bm s_{n+1})$.}}
\end{proof}

{The pseudo code of an} exact simulation algorithm based on Theorem~\ref{thm:2} is given in Algorithm~\ref{algo}. 
Although the main structure of the algorithm resembles that proposed by \citet{Dombry.Engelke.Oesting:2016} for max-stable processes, the key difficulty for general max-id processes resides in efficiently simulating from the distribution $P_{\bm s}(z,\cdot)$ given $Z(\bm s)=z$. For suitable constructions of the {PPP} $\{\eta_i;\, i=1,2,\dots\}$, we can investigate the specific structure of this conditional distribution. In this paper, we consider two different and very general structures for $\eta_i$, which are inspired from the construction of max-stable processes, yet they give rise to max-id models with much more flexibility in their joint tail decay rates. These two types of models, which encompass all of the commonly used max-stable models from the spatial extremes literature, are described in the following two subsections. 

\begin{algorithm}[h]
  \caption{Exact simulation of a max-id process $Z$ at locations $K_N=\{\bm s_1,\dots,\bm s_N\}.$\label{algo}}
  \begin{algorithmic}[1]
  \Require  Stationary marginal distribution $G_0$ of $Z$. 
  \Require  Dependence characterization of $Z$  allowing simulation of $P_{\bm s_n}$, $n=1,\ldots,N$. 
  \State Simulate $E_0 \sim \rm{Exp}(1)$ and set $z=G_0^{-1}(\exp(-E_0))$.
  \State Simulate $Y\sim P_{\bm s_1}(z,\cdot)$ {over $K_N$, which yields $\bm Y(K_N)=\{Y(\bm s_1),\ldots,Y(\bm s_N)\}^\top$}. 
  \State Set $\bm Z(K_N) = \bm Y(K_N)$.
  \ForAll {$n=2,\dots,N$}
  \State Simulate $E_0 \sim \rm{Exp}(1)$ and set $z=G_0^{-1}(\exp(-E_0))$.
  \While {$\{z>Z(\bm s_n)\}$}
  \State Simulate $Y\sim P_{\bm s_n}(z,\cdot)$ {over $K_N$, which yields $\bm Y(K_N)=\{Y(\bm s_1),\ldots,Y(\bm s_N)\}^\top$}.
  \If {$Y(\bm s_i)<Z(\bm s_i)$ for all $i=1,\dots,n-1$}
  \State update simulated values as $\bm Z(K_N)=\max(\bm Z(K_N),\bm Y(K_N))$ {(componentwise)}.
  \EndIf
  \State Simulate $E_1 \sim \rm{Exp}(1)$ and update $E_0$ and $z$ by setting $E_0 = E_0 + E_1$ and $z=G_0^{-1}(\exp(-E_0))$.
  \EndWhile
  \EndFor
  \Ensure Return $\bm Z(K_N)$
  \end{algorithmic}
 \end{algorithm}

\subsection{Representation of $\eta_i$ as a Gaussian scale mixture}
\label{sec:gaussian-scale-mixture}

\subsubsection{Model description} \label{sec:gaussian-scale-mixture-description}
One possible structure of the general max-id process is achieved by representing $\eta_i$ as a \emph{Gaussian scale mixture}, i.e., $\eta_i = R_iW_i$, where $\{R_i\}$ are the points of a Poisson point process on $\mathbb R^+$ with {mean measure $\kappa$ such that $\kappa([0,\infty))=\infty$}, and $\{W_i\}$ are independent {standard} Gaussian random fields {with mean zero and unit variance} on $\calS$, which are not necessarily identically distributed and may depend on the overall random ``magnitude'' $R_i$. This construction mimics  the max-stable spectral construction of extremal-$t$ processes \citep{Opitz:2013} arising for a specific choice of $\Lambda$ {in \eqref{eq:max-id}} with $\{R_i\}$ and $\{W_i\}$ being independent. To extend max-stable processes, \citet{Huser.etal:2020} proposed using a more flexible mean measure for $\{R_i\}$ and \citet{Zhong.Huser.Optiz:2022} further relaxed the independence assumption between $\{R_i\}$ and $\{W_i\}$. 

Specifically, they proposed a Weibull-tailed mean measure $\kappa$ for $\{R_i\}$, where
\begin{equation}\label{eq:measure_R}
\kappa(r):=\kappa\left([r,\infty)\right) = r^{-\beta}\exp\{-\alpha(r^{\beta}-1)/\beta\},\quad r>0,\quad (\alpha,\beta)^\top\in (0,\infty)^2. 
\end{equation}
With this specification, $\kappa(r)\to r^{-\alpha}$ as $\beta\downarrow0$. Therefore, the model remains in the ``neighborhood'' of max-stable processes for which $\kappa(r)=r^{-\alpha}$. {Recall that the extremal-$t$ max-stable process with $\alpha$-Fr\'echet margins can indeed be obtained by taking $\eta_i=R_iW_i$ in \eqref{eq:max-id} with $\{R_i\}$ a Poisson point process on $(0,\infty)$ with mean measure $\kappa(r)=r^{-\alpha}$, and $\{W_i\}$, independent zero mean Gaussian processes, that are rescaled such that $\E[\max\{0,W_i(\bm s)\}]=1$, and that are also independent of $\{R_i\}$. In other words, if $\{R_i\}$ and $\{W_i\}$ are independent of each other and $W_i$ is Gaussian, then the max-id model constructed from \eqref{eq:measure_R} can be arbitrarily close to the max-stable extremal-$t$ dependence structure by varying the parameter $\beta$.} Given $\{R_i\}$, {a natural choice is, thus, to specify} the processes $\{W_i\}$ {to be} standard Gaussian random fields with {some} correlation function $\rho(\bm s_1,\bm s_2;R_i)$, where $R_i$ modulates the dependence range of $W_i$ in such a way that large magnitudes $R_i$ create processes $W_i$ with weaker dependence. This gives extra flexibility for capturing the joint tail decay rate of the resulting max-id process $Z$ in the asymptotic independence setting, i.e., when $\lim_{u\to1}\Pr[G_0\{Z(\bm s_1)\}>u\mid G_0\{Z(\bm s_2)\}>u]=0$ with $G_0$ being the marginal distribution of $Z$. For example, \citet{Zhong.Huser.Optiz:2022} chose a non-stationary correlation $\rho(\bm s_1,\bm s_2;R_i)$, which may be specified as
\begin{equation}\label{eq:corr_function_W}
\rho(\bm s_1,\bm s_2;R_i) = (\lambda_{\bm s_1}\lambda_{\bm s_2})\left({\lambda_{\bm s_1}^2+\lambda_{\bm s_2}^2\over2}\right)^{-1}\exp\left\{-\left({\lambda_{\bm s_1}^2+\lambda_{\bm s_2}^2\over2}\right)^{-1/2}(1+R_i)^\nu\|\bm s_1-\bm s_2\|\right\},
\end{equation}
where $\lambda_{\bm s}>0$ is a spatially-varying range parameter surface that can be associated with spatial covariates, and $\nu\geq0$ is a parameter that controls how the scaling variable $R_i$ influences the dependence range of $W_i$. {In particular, when $\nu=0$, $R_i$ and $W_i$ are independent.} In the simulation examples below, we will illustrate some specific structures for $\lambda_{\bm s}$. When the value of $R_i$ is large and $\nu>0$, the dependence within the Gaussian random field $W_i$ indeed decreases with increasing ``magnitude" $R_i$. 
In this model, the {mean} measure {of $\{\eta_i;i=1,2,\dots\}$ at the finite set of} locations in $K\equiv K_N$ may be written as
\begin{equation}
\label{eq:meanmeasure1}
{\Lambda_K([\bm 0,\bm z_K]^C)=\int_0^\infty\{1-\Phi(\bm z_K/r;\bm\rho(K,K;r))\}\kappa(\mathrm{d}r),\quad \bm z_K > \bm 0,}
\end{equation}
where $\Phi(\cdot;\bm\rho(K,K;r))$ is the multivariate normal distribution function with zero mean and correlation matrix $\bm\rho(K,K;r)$ determined by the correlation function $\rho(\bm s_1,\bm s_2;r)$ for $\bm s_1,\bm s_2\in K${, and $[\bm 0,\bm z_K]^C \equiv ([0,z_1]\times\cdots\times[0,z_N])^C$}. {\cite{Huser.etal:2020} showed that the marginal mean measure $\Lambda_{\bm s_0}$ is an infinite continuous measure and locally finite, thus it satisfies the assumptions in \eqref{eq:assumption} stated in \S\ref{sec:Max-id}.}
{
The intensity function $g_{\Lambda_K}(\bm z_K)$ of $\Lambda_K$ can be written as follows, where we use notation $\varphi$ for the multivariate Gaussian density corresponding to $\Phi$ in \eqref{eq:meanmeasure1}:
\begin{equation}\label{eq:intensity-function}
g_{\Lambda_K}(\bm z_K) = 
\int_0^\infty r^{-N}\varphi(\bm z_K/r;\bm\rho(K,K;r))\kappa(\mathrm{d}r),\quad \bm z_K > \bm 0.
\end{equation}
For marginal intensities arising for singletons $K=\{\bm s\}$, we write $g_{\Lambda_{\bm s}}$. 
The density function of the probability measure $P_{\bm s_1}(z,\cdot)$ is then given by the conditional intensity}\begin{equation}\label{eq:regular-conditional-intensity}
{\bm z_{K\setminus \{\bm s_1\}} \mapsto g_{\Lambda_K}(z, \bm z_{K\setminus \{\bm s_1\}})/g_{\bm s_1}(z) \propto g_{\Lambda_K}(z, \bm z_{K\setminus \{\bm s_1\}}).}
\end{equation}


To illustrate the tail dependence properties of this max-id process, we use the level-dependent extremal coefficient \citep{Padoan2013,Huser.etal:2020}. For a collection of sites $K_N \subset \calS$ and $|K_N|=N$, the level-dependent extremal coefficient is defined as 
\begin{equation}\label{eq:ext_coef}
\theta_N(z_0) = {\log\{G(\bm z_0)\}\over\log\{G_0(z_0)\}}\Rev{,}
\end{equation}
where $\bm z_0 = (z_0,\dots,z_0)^T\in\mathbb {R}^N$, $G_0$ is the marginal distribution of $Z$ and $G$ is the joint distribution of the random vector $\bm Z(K_N)$ with observations at the locations in the set $K\subset\calS$. This coefficient can be expressed in terms of the measure $\Lambda$, and in the case where the margins are unit Fr\'echet it is specifically equal to $\theta_N(z_0) =z_0\Lambda_{K_N}\left\{(-\infty,\infty)^N\setminus (-\infty,z_0]^N\right\}$. From \eqref{eq:ext_coef} it is easy to see that $\Pr\{\bm Z(K_N)\leq z_0\} = G_0(z_0)^{\theta_N(z_0)}$, such that this coefficient can be interpreted as the effective number of independent variables among the components of the vector $\bm Z(K_N)$ at level $z_0$.  Higher values of $\theta_N(z_0)$ imply weaker dependence at quantile level $z_0$. This dependence measure  is widely used in extreme-value studies (mainly with max-stable models), where modeling the tail dependence structure is the main interest. As an example, we consider the correlation function of $W_i$ conditional on the event magnitude $R_i$ chosen as $\rho(\bm s_1,\bm s_2;R_i) =\exp\left\{-2(1+R_i)^\nu \|\bm s_1-\bm s_2\|\right\}$, which corresponds to the stationary and isotropic counterpart of \eqref{eq:corr_function_W} with spatially-constant $\lambda_{\bm s}=1/2$.  Figure~\ref{fig:GS} shows the realizations of $\{Z(s)\}_{s\in(0,1)}$ on a standard Gumbel scale generated using our proposed exact simulation algorithm, as well as the bivariate level-dependent extremal coefficients between two sites at distance $h=0.5$ for the parameters $\alpha=1$, $\beta=0,0.5,1,2$ and $\nu=0,0.25,0.5,1$. When $\beta=0$ (interpreted here as $\beta\downarrow0$), the extremal dependence persists with increasing level $z$. The curve of extremal dependence becomes a horizontal line when we $\nu$ and $\beta$ are both zero. In this case, the max-id process is the max-stable extremal-$t$ process. When $\beta$ is different from zero, we get asymptotic independence (since $\theta_2(z_0)\to2$ when $z\to\infty$), and the dependence level decreases as the values of $\beta$ and $\nu$ increase. The realizations were generated using the same random seed, and we can discern stronger fluctuations of the realizations as the value of  $\beta$ increases, corresponding to a decreasing dependence strength. 

\begin{figure}[t!]
	\centering
	\includegraphics[width=1\linewidth]{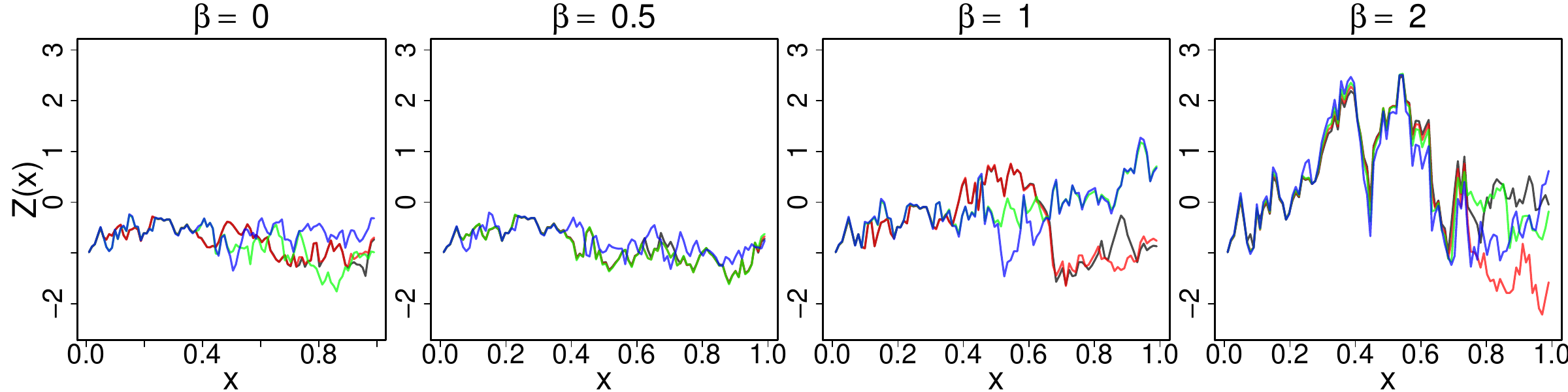}
	\includegraphics[width=1\linewidth]{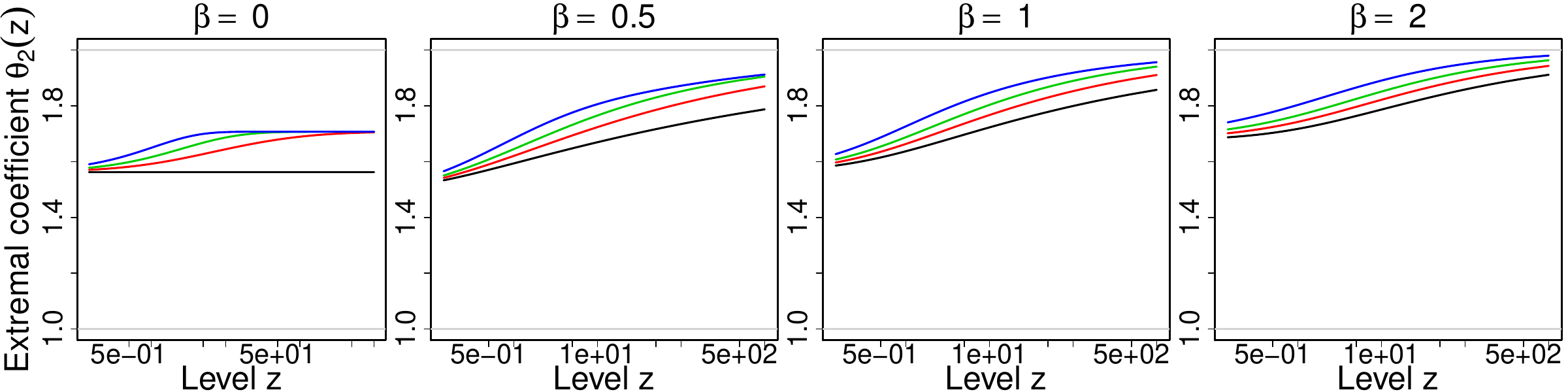}
	\caption{Simulations (top row) and bivariate level-dependent extremal coefficients (bottom row) for models based on the Gaussian scale mixture construction. The parameters are chosen as $\nu=0, 0.25,0.5,1$ (black, red, green, blue curves), where $\nu$ controls the dependence between $R_i$ and $W_i$ by using correlation function $\rho(\bm s_1,\bm s_2;R_i) =\exp\left\{-2(1+R_i)^\nu \|\bm s_1-\bm s_2\|\right\}$ conditional on $R_i$. The level $z$ is shown on a logarithmic scale.}
	\label{fig:GS}
\end{figure}

\subsubsection{Simulation algorithm}\label{sec:gaussian-scale-mixture-simulation}

The generic Algorithm~\ref{algo} based on extremal functions can be applied, and we now specifically describe how to sample from $P_{\bm s_1}(z,\cdot)$ {at a finite set of locations $K$} (need in lines 2 and 7 of Algorithm~\ref{algo}) for the max-id model constructed from Gaussian scale mixtures $\eta_i=R_iW_i$ as described in \S\ref{sec:gaussian-scale-mixture-description}. {Intuitively speaking, we first sample a random variable $\tilde R_z$ whose probability distribution corresponds to the intensity of $R_i$ conditional on $\eta_i(\bm s_1) = R_iW_i(\bm s_1)=z$, and in the second step we sample the Gaussian field $W_i$ at locations in $K$ conditional on $\eta_i(\bm s_1)=z$ and $\tilde R_z$, i.e., we sample a Gaussian random vector conditional on its value $z/ \tilde{R}_z$ at $\bm s_1$. 
} 

{To obtain the distribution of $\tilde R_z$ in the first step, we consider the joint intensity function of points $\{R_i\}$ and $\{W_i(\bm s_1)\}$ given by $(r,w)\mapsto \kappa(\mathrm{d}r)\varphi(w)$.  
 Next, we perform a change of variables from $(r,w)$ to $(r,z) =(r, rw)$, which yields the intensity $g_{R,\eta}(r,z) =\kappa(\mathrm{d}r)\varphi(z/r)/r$. Finally, we obtain the desired conditional density as $g_{\tilde R_z}(r;z) =g_{R,\eta}(r,z)/g_{\Lambda_{\bm s_1}}(z)$. 	Note that 
 this defines indeed a valid probability density function 
 since it corresponds to a conditional distribution of $P_{\bm s_1}(z,\cdot)$, which has been shown to be a proper probability measure.} 

{In the second step, conditional on $\tilde R_z$, we write $\tilde \eta(K)$ for the conditional Gaussian random vector with mean vector $\bm\rho(\bm s_1,K;\tilde R_z)z$, common variance $\tilde{R}_z^2$, and correlation matrix $\bm \rho(K,K;\tilde R_z)-\bm\rho(\bm s_1,K;\tilde R_z)\bm\rho(\bm s_1,K;\tilde R_z)^\top$, where $\bm\rho(\bm s_1,K;\tilde R_z)$ denotes the column vector of correlation values $\rho(\bm s_1,\bm s_j;\tilde R_z)$, $j=1,\ldots,N$, and $\bm \rho(K,K;\tilde R_z)$ is a matrix with entries $\rho(\bm s_i,\bm s_j;\tilde R_z)$, $i=1,\ldots,N,\, j=1,\ldots,N$. 
 	 We denote this joint Gaussian density function as $g_{\tilde \eta(K) \mid \tilde R_z}$. 
Note that, after integrating out $\tilde R_z$, it is easy to formally show that the (unconditional) distribution of $\tilde \eta(K)$ is 
$$\tilde \eta(K) \sim P_{\bm s_1}(z,\cdot).$$ 
Simulation from $P_{\bm s_1}(z,\cdot)$ at locations in $K$ can thus be performed using a two-step procedure, where we need to simulate $\tilde R_z$ first, and conditional on $\tilde R_z$, we then simulate $\tilde \eta(K)$ from a multivariate Gaussian distribution.} We here explore the use of two simulation techniques where {$g_{\tilde R_z}(r;z)$ only needs to be known up to a constant that depends on the value of $z$.} Precisely, {simulation from $P_{\bm s_1}(z,\cdot)$ proceeds} according to the following {successive} steps:
\begin{enumerate}[(1)]\label{simu:Gaussian-scale}
    \item Simulate a value {$\tilde R_z$} from {the density function $g_{\tilde R_z}(r;z)$, $r>0$}, using either 
    \begin{itemize}
    	\item the Metropolis--Hastings algorithm \citep[e.g.,][]{Robert2013} 
    	\item  the  adaptive rejection sampling method proposed in \cite{Gilks.Wild:1992}.
    \end{itemize}
    \item Given {$\tilde R_z$}, simulate a {random vector $\tilde \eta(K)$} from the {multivariate Gaussian density function $g_{\tilde \eta(K) \mid \tilde R_z}$}. 
    \item Return {$\tilde \eta(K)$}. 
\end{enumerate}
Notice that the second step corresponds to simulating  from a conditional multivariate normal distribution and can easily be performed exactly \citep[e.g.,][]{Varadhan2015} in dimension $N$ up to several thousands. 
Therefore, the key difficulty {for an exact simulation procedure} lies in the first step, where we have to simulate from the probability density {$g_{\tilde R_z}$}, {known up to a constant only.} 

\subsubsection{Options for simulating {$\tilde R_z$}}

With the Metropolis--Hastings (MH) method in Step~1 of the above algorithm in \S\ref{sec:gaussian-scale-mixture-simulation}, we use the following symmetric random walk proposal {on a logarithmic scale}: 
$${\log(\tilde R_z^{j}) \mid \tilde R_z^{j-1}  \sim \mathcal{N}(\log(\tilde R_z^{j-1}),\sigma^2), \quad j = 1,2,\ldots,}$$
where $\sigma^2$ is the proposal variance and has to be tuned {to achieve good mixing}, and the initial value {$\tilde R_z^{0}$} is either fixed by the user or drawn from some distribution that we can easily simulate from. 
MH is {usually} approximate (thus not exact) when the distribution of the initial value is different from the target distribution (which is always the case in practice), but its error becomes negligible when a sufficient number of iterations are performed. Here, the MH algorithm is performed on a single variable {$\tilde R_z$} at a time, so each Markov chain Monte Carlo (MCMC) chain {involved in Algorithm~\ref{algo}} mixes well and converges very quickly to its stationary distribution, while iterations are of very small computational cost. Our experience shows that very few iterations are in fact needed in practice to get accurate results. We therefore regard Algorithm~\ref{algo} combined with MH for simulating {$\tilde R_z$} as a general ``quasi-exact'' simulation method. {\citet{Propp.Wilson:1996} further showed how to perform exact MCMC-based sampling by coupling Markov chains, but for simplicity, we shall here only consider the standard MCMC procedure outlined above, since it is easy to implement and already able to deliver very accurate simulation samples.}  

As an alternative to MH, we also consider adaptive rejection sampling (ARS), which is an exact simulation algorithm that is often used in combination with the Gibbs sampler. It can generate exact independent samples from the target distribution, here the density {$g_{\tilde R_z}(r;z)$, $r>0$} with fixed $z$. The only requirement for using this method is that {$h(r)=\log g_{\tilde R_z}(r;z)$, $r>0$} is concave in $r$ with a connected support, and that it is continuous and differentiable everywhere within the support. In particular, for the previously presented parametric family of Gaussian scale mixture models based on \eqref{eq:measure_R}, it is straightforward to show that $h(r)$ is indeed a continuous, differentiable and concave function over its support $\mathbb R$. The ARS algorithm adaptively constructs a piecewise upper hull function $u(r)$ and a piecewise lower bounding function $l(r)$ of $h(r)$, respectively, and then uses the distribution $s(r)=\exp\{u(r)\}/\int \exp\{u(r')\} \mathrm{d}r'$ to generate proposals in the exact rejection sampling scheme. Figure \ref{fig:ars} illustrates the adaptive rejection sampling scheme, where the upper hull function $u(r)$ is tangent to the target $h(r)$ at {the points} $\bm r=\{r_1,r_2,r_3, r_4\}$. 
\begin{figure}
	\centering 
	\includegraphics[width=0.5\textwidth]{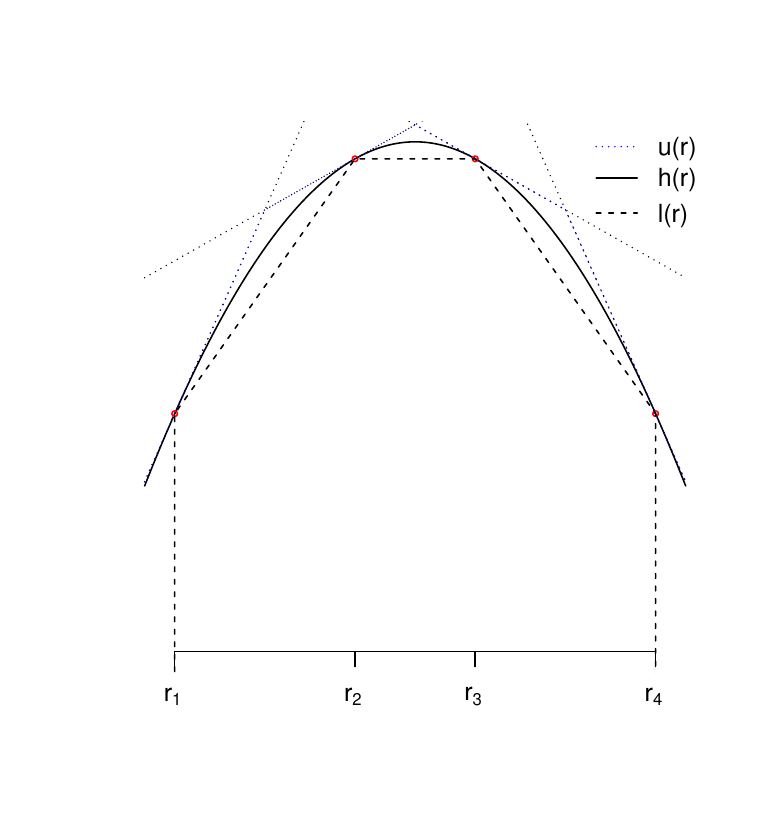}
	\caption{An illustration of the adaptive rejection sampling scheme, where the upper hull function $u(r)$ is tangent to the logarithmic target density $h(r)$ at locations $\bm r=\{r_1,r_2,r_3, r_4\}$. }
	\label{fig:ars}
\end{figure}
Therefore, $u(r)$ and $l(r)$ are uniquely determined by the {points} in $\bm r$ and the target function {$g_{\tilde R_z}(r;z)$}. As we increase the number of points in $\bm r$, the functions $u(r)$ and $l(r)$ will form an increasingly narrow envelope around the function $h(r)$, and the acceptance rate of the algorithm will increase. The steps of the ARS algorithm are as follows:
\begin{enumerate}[(1)]
   \item[(0)] Initialization: choose a set of anchor points $\bm r$ and construct the upper and lower bounds $u(r)$ and $l(r)$, respectively, as illustrated in Figure~\ref{fig:ars}.
   \item Sampling: sample {$\tilde R^*_z \sim s(r)$}, and sample  $U \sim \rm{Unif}(0,1)$ independently of {$\tilde R^*_z$}. 
   \item First acceptance test:  if {$U \leq \exp\{l(\tilde R^*_z)-u(\tilde R^*_z)\}$} then accept {$\tilde R_z^*$} and go to 5. 
   \item Second acceptance test: If {$U \leq \exp\{h(\tilde R^*_z)-u(\tilde R^*_z)\}$} then accept {$\tilde R_z^*$} and go to 5.
   \item If the first and second acceptance tests both fail, reject {$\tilde R^*_z$}; add {$\tilde R^*_z$} to list of anchor points and update the bounds $u(r)$ and $l(r)$. Go to 1.
   \item Return {$\tilde R^*_z$}.
\end{enumerate}
The distribution of the simulated value given acceptance is thus 
\begin{eqnarray*}
\Pr(\tilde R^*_z \in A\mid\text{acceptance of }\tilde R^*_z) & = &{\int_A s(r) \left\{\int_{0}^{\exp\{l(r)-u(r)\}}{\rm d}w + \int_{\exp\{l(r)-u(r)\}}^{\exp\{h(r)-u(r)\}}{\rm d}w\right\}{\rm d}r\over\int_{\mathbb R} s(r) \left\{\int_{0}^{\exp\{l(r)-u(r)\}}{\rm d}w + \int_{\exp\{l(r)-u(r)\}}^{\exp\{h(r)-u(r)\}}{\rm d}w\right\}{\rm d}r}\\
&=&{\int_A \exp\{h(r)\}{\rm d}r\over\int_{\mathbb R} \exp\{h(r)\}{\rm d}r},
\end{eqnarray*}
which corresponds  exactly to  the target distribution with density {$g_{\tilde R_z}(r;z),r>0$}. By exploiting the concavity property of $h$, ARS usually only needs a relatively small number of iterations before acceptance, and it may therefore be faster and more accurate than MH, as we will illustrate through the simulation experiments in \S\ref{sec:Simu-Study}. 

\subsection{Representation of $\eta_i$ as a Gaussian location mixture} 

\subsubsection{Model description}

Another flexible and general max-id model family is the class of \emph{Gaussian location mixtures}, where $\eta_i$ has the form $\eta_i=R_i+W_i$. Here, the location variables $\{R_i\}$ are the points of a Poisson point process over $\mathbb R$ with mean measure $\overline \kappa$, and $\{W_i\}$ are independent {standard} Gaussian random fields {with mean zero and unit variance}. Again, the random variables $\{R_i\}$ can be interpreted as overall event magnitudes, while the random processes $\{W_i\}$ vary over $\mathcal{S}$ and can be viewed as spatial ``profiles".  For a general {and flexible} parametric form of the distribution of event magnitudes $\{R_i\}$,  we propose to define $\overline \kappa$ as
\begin{multline}\label{eq:measure_R2}
\overline\kappa(r) = \exp\left[-\alpha\left\{\mathbb{I}(r>0)|r|^{\beta_1}-\mathbb{I}(r<0)|r|^{\beta_2}\right\}\right], \ r\in\mathbb R,\ (\alpha,\beta_1,\beta_2)^\top\in(0,\infty)^2\times(0,2),
\end{multline}
where $\overline\kappa(r)$ is the short-hand notation for $\overline\kappa([r,\infty))$. The parameter restrictions for $\alpha$, $\beta_1$ and $\beta_2$ ensure that the mean measure $\Lambda$ of the points $\{\eta_i\}$ is indeed Radon (i.e., locally finite) as required to defined a valid Poisson point process. Max-id models based on Gaussian location mixtures have never been explored in the literature, and the new model~\eqref{eq:measure_R2} is especially interesting in view of the flexible dependence structure that it yields and its close connection to {the max-stable Brown--Resnick process} \citep{Kabluchko.etal:2009}. The parameter $\beta_1$ controls the dependence strength in the upper tail, while $\beta_2$ controls the dependence strength in the lower tail.  As in \S\ref{sec:gaussian-scale-mixture-description}, we may further allow for dependence between the point $R_i$ and the Gaussian field $W_i$ for each $i$. For a general and flexible correlation function of $W_i$, we may again use the correlation function in \eqref{eq:corr_function_W} to drive the range of the Gaussian fields $W_i$ through the location variable $R_i$, though other options are of course possible. In this case, the mean measure {of $\{\eta_i;i=1,2,\dots\}$ at the finite set of locations $K\equiv K_N$} is given by 
$${\Lambda_K([0,\bm z_K]^C)=\int_{\mathbb R}\{1-\Phi(\bm z_K-r;\bm \rho(K,K;r))\}\overline\kappa({\rm d}r),}$$
{where $\Phi(\cdot;\bm \rho(K,K;r)))$ and $\bm \rho(K,K;r)$ are defined as before. The expressions for unconditional and conditional intensity functions can be derived by analogy with the scale mixture case.}
When $\alpha=\beta_1=\beta_2=1$ and $\{R_i\}$ and $\{W_i\}$ are independent of each other, the proposed model reduces to the popular Brown--Resnick max-stable model \citep{Kabluchko.etal:2009,Wadsworth2014}, although this model is often defined through intrinsically stationary Gaussian processes $\{W_i\}$ with some variogram function $\gamma(h)$ rather than through a correlation function $\rho(\bm s_1,\bm s_2)$. 

\subsubsection{Simulation algorithm}

By analogy with the approach developed for Gaussian scale mixtures, we can here again use Algorithm~\ref{algo} for the max-id location mixture process. Moreover, to sample from {$P_{\bm s_1}(z,\cdot)$ at a finite set of locations}, we propose a similar simulation algorithm, where the first step consists in sampling a random variable $\tilde R_z$ from the probability density {$g_{\tilde R_z}(r;z)\propto \varphi(z-r)\mathrm{d}\overline\kappa({\rm d}r)$, $r>0$, and the second step is to sample a Gaussian random vector $\tilde \eta(K)$, conditional on $\tilde R_z$, with mean $\tilde R_z+\bm\rho(\bm s_1,K;\tilde R_z)(z-\tilde R_z)$, common unit variance, and correlation matrix $\bm \rho(K,K;\tilde R_z)-\bm\rho(\bm s_1,K;\tilde R_z)\bm\rho(\bm s_1,K;\tilde R_z)^\top$, where $\bm\rho(\bm s_1,K;\tilde R_z)$ denotes the column vector of correlation values $\rho(\bm s_1,\bm s_j;\tilde R_z)$, $j=1,\ldots,N$, and $\bm \rho(K,K;\tilde R_z)$ is the corresponding correlation matrix. As above, we denote the density of $\tilde \eta(K)$ conditional on $\tilde R_z$ as $g_{\tilde \eta(K) \mid \tilde R_z}$. Again, the resulting (unconditional) distribution of $\tilde \eta(K)$ can be shown to be $P_{\bm s_1}(z,\cdot)$ at locations in $K$.} However, here we observe that the density {$g_{\tilde R_z}(r;z)$} is not logarithmically concave, and therefore we cannot use the adaptive resampling scheme to generate extremal functions {at a finite set of locations in $K$} from {$P_{\bm s_1}(z,\cdot)$}, but we can still use the quasi-exact Metropolis--Hastings algorithm, which provides high accuracy at low computational cost. Thus, our proposed procedure for simulating max-id processes of Gaussian location mixture type at a collection of sites {$K$} relies on Algorithm~\ref{algo} combined with the following {successive steps to sample from $P_{\bm s_1}(z,\cdot)$}:
\begin{enumerate}[(1)]\label{simu:Gaussian-location}
    \item Simulate a value  {$\tilde R_z$} according to the {density $g_{\tilde R_z}(r;z)$} by using the Metropolis--Hastings method with random walk proposal distribution {$\tilde R_z^{j}\mid \tilde R_z^{j-1} \sim \mathcal{N}(\tilde R_z^{j-1},\sigma^2)$, $j=1,2,\ldots$}, and initial point $\tilde R_z^0$.
    \item Given {$\tilde R_z$}, simulate a vector {$\tilde \eta(K)$} according to the {multivariate Gaussian density function $g_{\tilde \eta(K) \mid R_z}$}.
    \item Return {$\tilde \eta(K)$}. 
\end{enumerate}
As in the Gaussian scale mixture case, the second step consists in simulating from a conditional multivariate normal distribution, which is fast and straightforward. Figure~\ref{fig:GL} shows examples of realizations of $\{Z(s)\}_{s\in(0,1)}$ with standard Gumbel marginal distributions (generated using the proposed quasi-exact algorithm), as well as the bivariate level-dependent  extremal coefficients at distance $h=0.5$ for the parameters $\beta_1 = 0.5,1,1.5$, $\beta_2 = 0.5,1,1.5$ and $\alpha=1$. The correlation function for $W_i$ is here chosen to be exponential as $\rho(h) =\exp\left(-2h\right)$. We see that when both $\beta_1$ and $\beta_2$ are equal $1$, the curve of the extremal coefficients is a horizontal line. In this case, the max-id process is the Brown--Resnick max-stable process. As the values for $\beta_1$ and $\beta_2$ increase, the dependence levels in the lower tail and the upper tail, which are controlled by $\beta_2$ and $\beta_1$, respectively, decrease. When $\beta_1>1$, the extremal coefficient increases as the level $z\to\infty$. By contrast, when $\beta_1<1$, the extremal coefficient decreases as the level $z\to\infty$. Similar results are observed for the dependence level in the lower tail with $\beta_2$. Therefore, this model shows great flexibility through its ability to capture asymptotic dependence and asymptotic independence in the upper tail and different extremal dependence structures in the lower tail. 
In these  realizations generated with the same random seed, we discern that the fluctuations are increasing as the dependence level decreases, as is expected.

\begin{figure}[t!]
	\centering
	\includegraphics[width=1\linewidth]{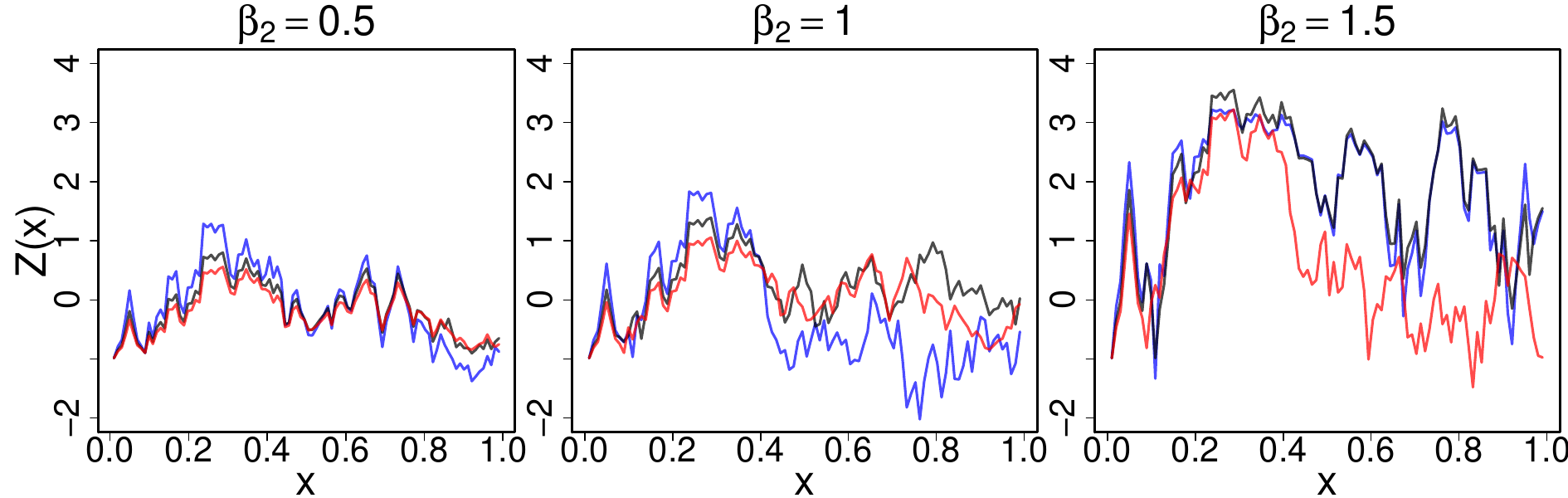}
	\includegraphics[width=1\linewidth]{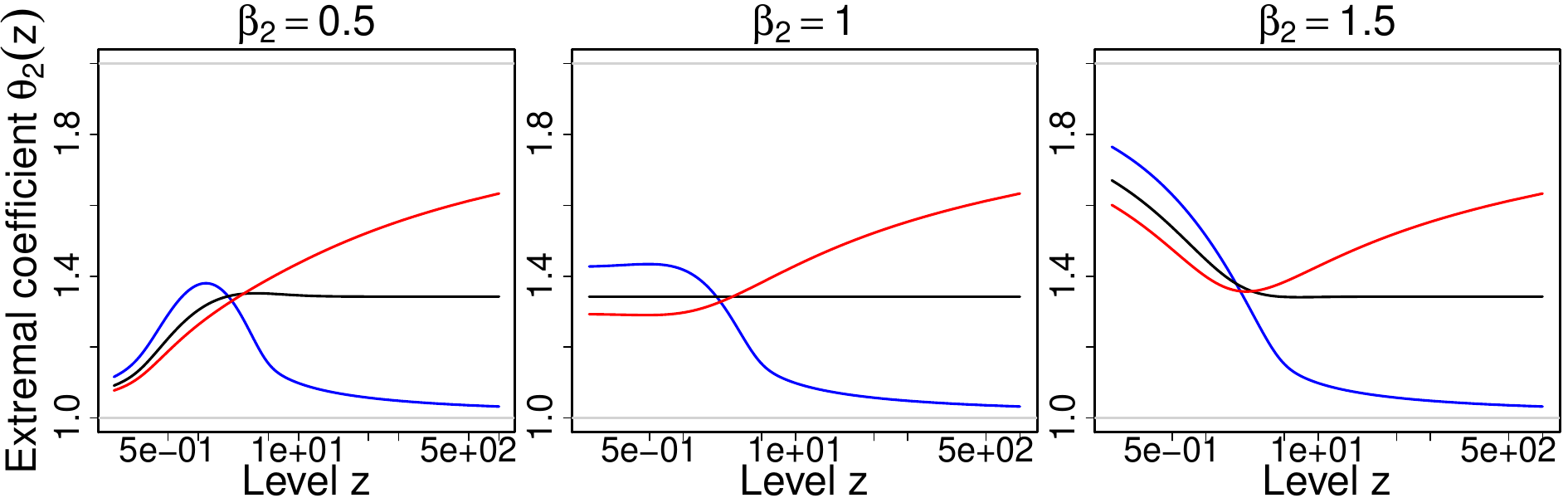}
	\caption{Simulations (top row) and bivariate level-dependent extremal coefficients (bottom row) for models based on the Gaussian location mixture construction. Parameters are chosen as $\alpha=1$ and $\beta_1=0.5,1,1.5$ (blue, black, red). The correlation function for $W_i$ is exponential and given by $\rho(h) =\exp\left(-2h\right)$. The level $z$ is shown in logarithmic scale.}
	\label{fig:GL}
\end{figure}
	
\section{Simulation Study}
\label{sec:Simu-Study}

In this section, we use the Gaussian scale mixture representation described in \S\ref{sec:gaussian-scale-mixture} as an example to illustrate Algorithm~\ref{algo} and to assess the performance of the two algorithmic choices detailed  in \S\ref{sec:gaussian-scale-mixture-simulation} for sampling from $P_{\bm s_0}(z,\cdot)$, i.e., of the Metropolis--Hastings scheme and the adaptive rejection sampling scheme. We also provide a comparison with respect to a naive simulation approach inspired from \citet{Schlather:2002}, which consists in simulating the magnitudes $R_i$ in descending order and truncating the maximum in \eqref{eq:max-id} as $Z(\bm s)\approx\max_{i=1,\ldots,n}\eta_i(\bm s)$ with $\eta_i=R_iW_i(\bm s)$ for a sufficiently large integer $n$. 

 We perform the simulation on a $7\times7$ regular grid in $(0,1)^2$, corresponding to $49$ locations in $K$. We fix $\alpha=5,\beta=2$ in \eqref{eq:measure_R} and $\nu=3,\lambda_{\bm s} = \exp[2-0.5\Phi\{(s_x-0.5)/0.25\}]$ in \eqref{eq:corr_function_W}, where $\bm s=(s_x,s_y)^\top$ and $\Phi(\cdot)$ is the standard normal distribution function. We implemented three algorithms using the \texttt{R} software: the approximate naive sampling method (APPROX), Algorithm~\ref{algo} combined with the quasi-exact random walk Metropolis--Hastings (MH), and Algorithm~\ref{algo} combined with the exact adaptive rejection sampling (ARS). Our aim is to compare the simulation accuracy with respect to univariate distributions and to dependence properties. 
 We tuned the settings of these three algorithms such that they require approximately the same amount of running time to generate $N=10^5$ independent samples of the max-id process on the grid. These settings are given as follows: 
\begin{enumerate}[(1)]
	\item Naive method: Simulate $\eta_i=R_iW_i$,  $i=1,2,\dots,n$ on the grid  $K$, where $n=100$.
	\item Metropolis--Hastings scheme: In the MCMC, run $100$ iterations to generate one sample {$\tilde R_z$} from {the density function $g_{\tilde R_z}(r;z)$}.
	\item Adaptive rejection sampling scheme: Use 6 points to initialize the upper hull function and lower hull function.
\end{enumerate}
Based on  the generated samples, we compute the empirical Kullback--Leibler (KL) divergence $D_{\bm s_0}(p\|q)$ \citep{Kullback.Leibler:1951} in order to compare the accuracy of simulated samples at each site $\bm s_0\in K\subset\calS$. Precisely, we compute
\begin{equation}
	{D_{\bm s_0}(p\|q) = \sum_{i=1}^N \hat p\{Z_i(\bm s_0)\}\log\left[{\hat p\{Z_i(\bm s_0)\}\over q\{Z_i(\bm s_0)\}}\right],\quad \hat p\{Z_i(\bm s_0)\} = {p\{Z_i(\bm s_0)\}\over\sum_{i=1}^N p\{(Z_i(\bm s_0))\}},}
\end{equation}
{where $q$ places mass $1/N$ on each simulated sample as the empirical cdf,} while $p$ is the exact marginal density of $Z$. {In practice, the samples are always treated as discrete random variables, and $\hat p$ and $q$ need to be probability mass functions instead of density functions.} Therefore, we have $49$ empirical KL divergences in total, one for each point on the grid. Figure \ref{fig:KL} shows the boxplot of the $49$ empirical KL divergences using the simulated datasets generated by the three algorithms. Both the adaptive rejection sampling scheme and the Metropolis--Hastings scheme outperform the naive method by a significant margin. Moreover, the median of the empirical KL divergence for the adaptive rejection sampling scheme is notably lower than  with  the Metropolis--Hastings scheme. Generating a single max-id replication on the grid took approximately 1.2 seconds using a single-core machine. 
\begin{figure}[t!]
	\centering
	\includegraphics[width=0.5\textwidth]{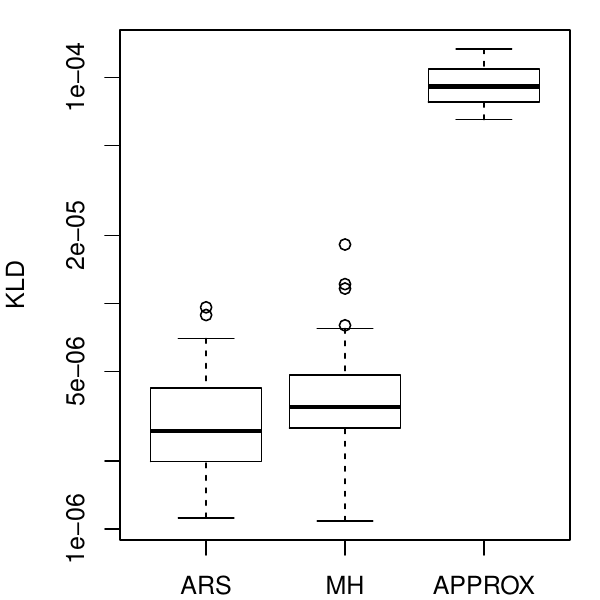}
	\caption{Boxplots of  empirical KL divergences for the $49$ grid locations based on the simulated data sets generated by the naive method (APPROX), the Metropolis--Hastings scheme (MH) and the adaptive rejection sampling scheme (ARS).}\label{fig:KL}
\end{figure}

To further evaluate the performance of each sampling scheme with respect to the dependence structure, we calculated the bivariate (level-dependent) empirical extremal coefficients $\hat\theta_2$. For each pair of sites, we compared the empirical extremal coefficient computed from the $N=10^5$ simulated replicates with the theoretical counterpart computed by evaluating the mean measure $\Lambda$ with a very accurate numerical unidimensional integral.  Figure~\ref{fig:ext_coef} shows the results for the three simulation techniques {with the computed standard error, $sd$, of the approximation error shown in the title of each panel. The adaptive rejection sampling scheme appears to perform generally better than the naive scheme at all levels, especially at low quantile levels when the naive scheme is significantly biased. At high quantile levels ($q=0.95$), all methods have a comparable performance. The pseudo-exact Metropolis--Hastings scheme appears to be as accurate as the adaptive rejection sampling method.} Our results imply that if we draw statistical inferences based on simulated samples from the naive method, the strength of the dependence at relatively low quantile levels of extremes would be significantly overestimated in the current setting. Better results might be obtained for the naive method by increasing the number $n$ of simulated fields $\{\eta_i\}$, but this comes with a computational cost. By contrast, our proposed algorithm always provides exact max-id simulations at a low computational cost, which may prove essential for simulation-based statistical inference.

\begin{figure}
	\centering
	\includegraphics[page=1,width=0.9\textwidth]{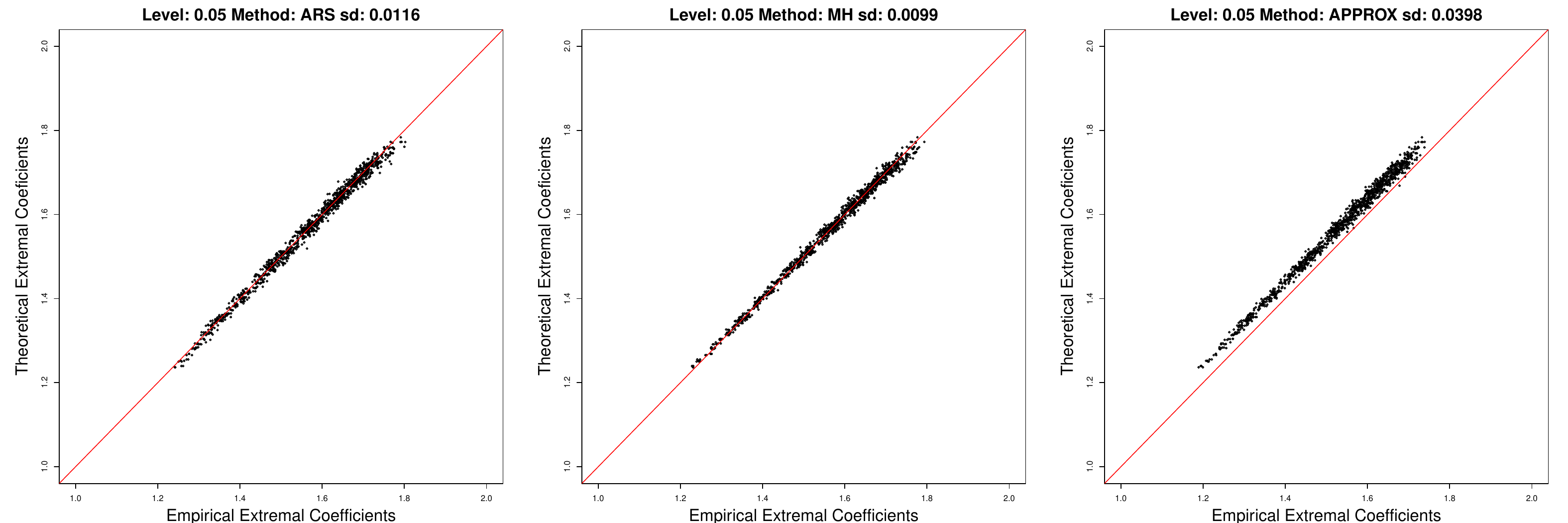}
	\includegraphics[page=2,width=0.9\textwidth]{scatter_plot_extremal.pdf}
	\includegraphics[page=4,width=0.9\textwidth]{scatter_plot_extremal.pdf}
	\includegraphics[page=5,width=0.9\textwidth]{scatter_plot_extremal.pdf}
	\caption{Bivariate scatterplots of theoretical against empirical bivariate level-dependent extremal coefficients at quantile levels $0.05,0.25,0.5,0.75,0.95$ (from top to bottom) for all pairs of locations, based on the three sampling schemes (left: Algorithm~\ref{algo} with adaptive rejection sampling (ARS), middle: Algorithm~\ref{algo} with Metropolis--Hastings (MH), right: naive approximate method).}\label{fig:ext_coef}
\end{figure} 

As in \citet{Dombry.Engelke.Oesting:2016}, it can be shown that the expected number of random fields $W_i$ to be simulated in the exact simulation algorithm equals to the number of simulation sites, i.e., the dimension $|K|$. To further illustrate the efficiency of our proposed algorithm, we simulated 10 replicas for different numbers of randomly sampled locations $K$ on the unit square $(0,1)^2$ using the ARS scheme and calculated the time and number of random fields $W_i$ that needed to be simulated. The results are shown in Figure~\ref{fig:complexity}. The parameters are set to be $\alpha=1,\beta=1$ and the correlation function for $W_i$ is chosen to be $\rho(h) = \exp(-h)$. Since we use the Cholesky decomposition for simulating the process $W_i$ on $K$, whose complexity is $\calO(|K|^3)$, the expected overall complexity for simulating a max-id replicate is $\calO(|K|^4)$. The red curve on the left panel of Figure~\ref{fig:complexity} illustrates this. Here, the computational time is about two minutes for one replicate in dimension $|K|=900$, which is already quite fast. If further speeds-up are required, approximate algorithms for simulating the Gaussian processes $W_i$ may be employed (e.g., based on Gaussian Markov random fields). As expected, the right panel of Figure~\ref{fig:complexity} shows that the expected number of simulations of $W_i$ that are needed is indeed equal to $|K|$. We also investigated how the ordering of locations $\bm s_1,\ldots,\bm s_n$ chosen in Algorithm~\ref{algo} to iteratively sample from $P_{\bm s_0}(z,\cdot)$ might affect the algorithm's efficiency. Our results (not shown) suggest that there was very little difference among the four orderings considered (coordinate-wise, random, middle-out, or maximum-minimum orderings), in terms of the mean and variance of the number of processes $W_i$ to sample in each simulation. {In all simulations presented here,} we used a coordinate-wise ordering for simplicity.

\begin{figure}[t!]
	\centering
	\includegraphics[width=0.8\textwidth]{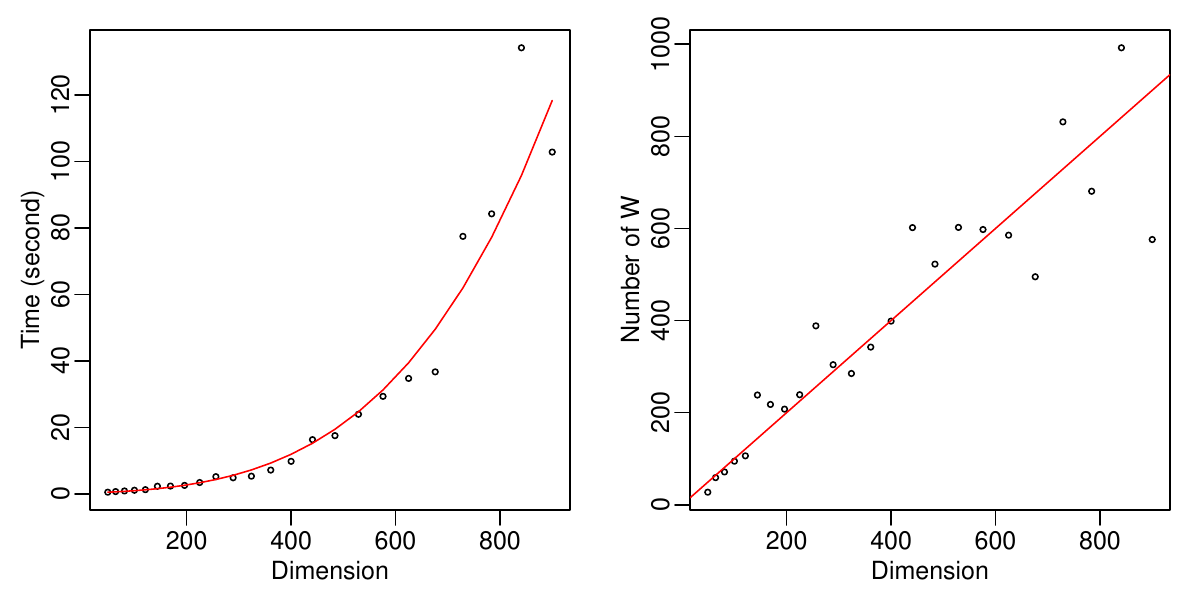}
	\caption{Left: Time in second needed to simulate one max-id replicate for different numbers of locations $|K|$. The red curve is the fit from a $4$-th order polynomial. Right: Number of random fields $W_i$ that were simulated for one max-id replicate for different numbers of locations $|K|$. The red line is the diagonal. The parameters are here set to be $\alpha=1,\beta=1$ and the correlation function for $W_i$ is chosen to be $\rho(h) = \exp(-h)$.}
	\label{fig:complexity}
\end{figure}
\section{Discussion}
\label{sec:Discussion}

In this paper, we have generalized the exact simulation algorithm for max-stable processes based on extremal functions to the more general class of max-id processes. Whenever possible, we recommend using the adaptive rejection sampling scheme for the conditional simulation of {$\tilde R_z$}, in order to achieve fast and exact simulations in cases where variables to be simulated do not correspond to some standard distribution from which samples are easily drawn. Although the specific implementation of the algorithm still depends on the chosen representation for $\eta_i$, our approach is general and can be applied to very broad classes of max-id processes while preserving computational efficiency. Moreover, the specific examples showcased in this paper already have very flexible dependence structures. In a simulation study, we used three different sampling algorithms to compare their speed and accuracy. Our novel exact simulation algorithm using the adaptive rejection sampling scheme clearly outperforms the two alternatives. In cases where the target density in the adaptive rejection sampling scheme is not logarithmically concave, it is possible to use an appropriately defined quasi-exact Metropolis--Hastings scheme to achieve high accuracy. {Exploring exact MCMC procedures based on coupling of Markov chains \citep{Propp.Wilson:1996} would also be an interesting future research direction.}

Besides standard uses of such fast and exact simulation techniques in Monte--Carlo-based statistical inferences involving quantities that cannot be calculated analytically, another interesting perspective concerns Approximate Bayesian Computation (ABC) for general max-id processes, which could be implemented by analogy with the approach developed for max-stable processes in \citet{Erhardt2012}. {As described above, the computation of multivariate density and distribution functions} is often numerically very expensive due to integrals that have to be evaluated numerically, such that likelihood-free inference techniques are promising alternatives. Fast simulation of large numbers of samples of the model is required in likelihood-free estimation methods such as ABC, and in other simulation-based inference methods such as maximum simulated likelihood \citep[e.g.,][]{Hajivassiliou2000}.

\section*{Acknowledgments}
{We thank the Editor, Associate Editor, and two anonymous reviewers for constructive comments that helped improve the paper. This publication is based upon work supported by the King Abdullah University of Science and Technology (KAUST) Office of Sponsored Research (OSR) under Awards No. OSR-CRG2017-3434 and No. OSR-CRG2020-4394.}

\baselineskip=15pt

\bibliographystyle{CUP}
\bibliography{reference}
\baselineskip 10pt

\end{document}